\let\myPushQED=\pushQED
\let\myPopQED=\popQED
\newcommand{\myignore}[1]{}
\newenvironment{proof*}
  {\let\pushQED=\myignore\begin{proof}\let\pushQED=\myPushQED}
  {\def\popQED{}\end{proof}\let\popQED=\myPopQED}
 \gdef\xxxmark{%
   \expandafter\ifx\csname @mpargs\endcsname\relax 
     \expandafter\ifx\csname @captype\endcsname\relax 
       \marginpar{xxx}
     \else
       xxx 
     \fi
   \else
     xxx 
   \fi}
 \gdef\xxx{\@ifnextchar[\xxx@lab\xxx@nolab}
 \long\gdef\xxx@lab[#1]#2{{\bf [\xxxmark #2 ---{\sc #1}]}}
 \long\gdef\xxx@nolab#1{{\bf [\xxxmark #1]}}
\newtheorem{theorem}{Theorem}               
\newtheorem{lemma}[theorem]{Lemma}          
\newtheorem{corollary}[theorem]{Corollary}    
\newtheorem{problem}{Problem}               
\newcommand{\fct}{\rightarrow}
\newcommand{\eps}{\varepsilon}
\newcommand{\E}{\textnormal{E}}
\newcommand{\Var}{\textnormal{Var}}
\newcommand{\req}[1]{(\ref{#1})}
\newcommand{\floor}[1]{\lfloor{#1}\rfloor}
\let\phi=\varphi
\def\tsize{t}
\renewcommand{\th}{\ifmmode{^{\textrm{th}}}\else{\textsuperscript{th}\ }\fi}
\newcommand{\nd}{\ifmmode{^{\textrm{nd}}}\else{\textsuperscript{nd}\ }\fi}
\newcommand{\rd}{\ifmmode{^{\textrm{rd}}}\else{\textsuperscript{rd}\ }\fi}
\newcommand{\drop}[1]{}
\newcommand{\qed}{\hbox{\rule{6pt}{6pt}}}
\newenvironment{proof}[1][]{\paragraph{Proof{#1}}}{\hfill\qed\medskip\\}
\title{{\Large Lecture Notes on}\\[1ex]
Linear Probing with 5-Independent Hashing
} 
\author{Mikkel Thorup}
\begin{document}

\maketitle

\begin{abstract}
These lecture notes show that linear probing takes expected constant
time if the hash function is 5-independent. This result was first
proved by Pagh et al.~[STOC'07,SICOMP'09].  The simple proof here is
essentially taken from [P{\v a}tra{\c s}cu and Thorup ICALP'10].  We
will also consider a smaller space version of linear probing that may
have false positives like Bloom filters.  

These lecture notes illustrate the use of higher moments in data
structures, and could be used in a course on randomized algorithms.
\end{abstract}

\section{$k$-independence}
The concept of $k$-independence was introduced by Wegman and
Carter~\cite{wegman81kwise} in FOCS'79 and has been the cornerstone of
our understanding of hash functions ever since. A hash function is a
random function $h: [u] \to [\tsize]$ mapping {\em keys\/} to {\em
  hash values}.  Here $[s]=\{0,\ldots,s-1\}$.  We can also think of a
$h$ as a random variable distributed
over $[\tsize]^{[u]}$.   We say that $h$ is {\em $k$-independent\/} if
for any distinct keys $x_0, \dots, x_{k-1} \in [u]$ and (possibly
non-distinct) hash values $y_0,\ldots, y_{k-1}\in [t]$, we have
$\Pr[h(x_0)=y_0\wedge\cdots \wedge
  h(x_{k-1})=y_{k-1}]=1/t^k$. Equivalently, we can define
$k$-independence via two separate conditions; namely,
\begin{itemize}
\item[(a)] for any distinct keys $x_0, \dots, x_{k-1} \in [u]$, the hash values
$h(x_0), \dots, h(x_{k-1})$ are independent random variables, that
is, for any (possibly non-distinct) hash values 
$y_0,\ldots, y_{k-1}\in [t]$ and
$i\in[k]$, $\Pr[h(x_i)=y_i]=\Pr\left[h(x_i)=y_i\mid\bigwedge_{j\in[k]\setminus\{i\}}h(x_j)=y_j\right]$, and
\item[(b)] for
any $x\in [u]$, $h(x)$ is uniformly distributed in $[\tsize]$. 
\end{itemize}
As the concept of independence is fundamental to probabilistic
analysis, $k$-independent hash functions are both natural and powerful in
algorithm analysis. They allow us to replace the heuristic assumption
of truly random hash functions that are uniformly distributed in $[\tsize]^{[u]}$, hence
needing $u\lg \tsize$ random bits ($\lg=\log_2$), with real implementable hash
functions that are still ``independent enough'' to yield provable
performance guarantees similar to those proved with true randomness. We are then left with the natural goal of
understanding the independence required by algorithms. 

Once we have proved that $k$-independence suffices for a hashing-based 
randomized algorithm, we are free to use {\em any\/} $k$-independent hash function.
The canonical construction of a $k$-independent hash function is based on
polynomials of degree $k-1$. Let $p \ge u$ be prime. Picking random
$a_0, \dots, a_{k-1} \in [p]=\{0, \dots, p-1\}$, the hash function is
defined by:
\begin{equation}\label{eq:prime}
h(x) = \big( a_{k-1} x^{k-1} + \cdots + a_1 x + a_0 \big)
        \bmod{p} 
\end{equation}
If we want to limit the range of hash values to $[t]$, we use
$h'(x)=h(x)\bmod t$. This preserves requirement (a) of independence among
$k$ hash values. Requirement (b) of uniformity is close to 
satisfied if $p\gg t$. More precisely, for any key $x\in [p]$ and
hash value $y\in [t]$, we get $1/t-1/p<\Pr[h'(x)=y]<1/t+1/p$.

Sometimes 2-independence suffices. For example, 2-independence implies
so-called universality \cite{carter77universal}; namely that the probability
of two keys $x$ and $y$ colliding with $h(x)=h(y)$ is $1/\tsize$; or close
to $1/t$ if the uniformity of (b) is only approximate.
Universality implies expected constant time performance of hash tables
implemented with chaining. Universality also suffices
for the 2-level hashing of Fredman et al.
\cite{fredman84dict}, yielding static hash tables with constant query time.

At the other end of the spectrum, when dealing with problems involving $n$ objects, $O(\lg n)$-independence suffices in
a vast majority of applications. One reason for this is the Chernoff
bounds of~\cite{schmidt95chernoff} for $k$-independent events, whose
probability bounds differ from the full-independence Chernoff bound by
$2^{-\Omega(k)}$. Another reason is that random graphs with $O(\lg
n)$-independent edges~\cite{alon08kwise} share many of the properties
of truly random graphs.

The independence measure has long been central to the study of
randomized algorithms. It applies not only to hash functions, but also 
to pseudo-random number generators viewed as assigning hash values to $0,1,2,\ldots$.  For example, \cite{karloff93prg} considers variants of
QuickSort, \cite{alon99linear} consider the maximal bucket size for
hashing with chaining, and   
\cite{cohen09cuckoo5,dietzfelbinger09cuckoo-bas} consider Cuckoo hashing.
In several cases \cite{alon99linear,dietzfelbinger09cuckoo-bas,karloff93prg},
it is proved that linear transformations $x\mapsto \big( (ax + b) \bmod p \big)$ do not suffice for good performance, hence that 
2-independence is not in itself sufficient.

Our focus in these notes is linear probing described below.

\section{Linear probing}\label{sec:linpb}
Linear probing is a classic implementation of hash tables. It uses a
hash function $h$ to map a dynamic set $S$ of keys into an array $T$ of size
$\tsize>|S|$. The entries of $T$ are keys, but we can also see if
an entry is ``empty''. This could be coded, either via an extra bit, or
 via a distinguished nil-key. We start with an empty set $S$ and all empty
locations. When inserting $x$, if the desired location $h(x)\in
[\tsize]$ is already occupied, the algorithm scans $h(x)+1, h(x)+2,
\dots,\tsize-1,0,1,\ldots$ until an empty location is found, and
places $x$ there. Below, for simplicity, we ignore the wrap-around
from $t-1$ to $0$, so a key $x$ is always placed in a location
$i\geq h(x)$.

To search a key $x$, the query algorithm starts at $h(x)$ and scans
either until it finds $x$, or runs into an empty position, which
certifies that $x$ is not in the hash table.  When the query search is
unsuccessful, that is, when $x$ is not stored, the query algorithm
scans exactly the same locations as an insert of $x$. A general bound
on the query time is hence also a bound on the insertion time.

Deletions are slightly more complicated. The invariant we want to
preserve is that if a key $x$ is stored at some location $i\in[t]$,
then all locations from $h(x)$ to $i$ are filled; for otherwise the
above search would not get to $x$. Suppose now that $x$ is deleted
from location $i$. We then scan locations $j=i+1,i+2,\ldots$ for a key $y$
with $h(y)\leq i$. If such a $y$ is found at location $j$, we move $y$ to location
$i$, but then, recursively, we have to try refilling $j$, looking for a later
key $z$ with $h(z)\leq j$. The deletion process terminates when we
reach an empty location $d$, for then the invariant says that there
cannot be a key $y$ at a location $j>d$ with $h(y)\leq d$. The
recursive refillings always visit successive locations, so the
total time spent on deleting $x$ is proportional to the number of
locations from that of $x$ and to the first empty location. Summing up,
we have
\begin{theorem}\label{thm:linpb}
With linear probing, the time it takes to search,
insert, or delete a key $x$ is at most proportional to the number of
locations from $h(x)$ to the first empty location.
\end{theorem}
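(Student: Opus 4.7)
The plan is to unify all three operations by pinning each one to the contiguous ``run'' of non-empty cells that starts at $h(x)$ and ends at the first empty location $e$ encountered at or after $h(x)$. Let $R=e-h(x)+1$ denote the size of this block; the goal is to show that each of the three operations examines only cells in $[h(x),e]$, so that its cost is $O(R)$.

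First I would handle unsuccessful search, which is the cleanest case: by definition the algorithm starts at $h(x)$, scans forward, and halts the first time it sees an empty cell, which is precisely $e$. Next I would handle successful search by invoking the data structure invariant already identified in the excerpt: if $x$ is stored at position $i$, then every cell in $[h(x),i]$ is occupied. Since $e$ is empty this forces $i<e$, so the scan from $h(x)$ to $i$ stays inside the block and costs $O(R)$. Insertion is then free: the excerpt already observes that an insert of $x$ scans the same sequence of cells as an unsuccessful search for $x$, so the unsuccessful search bound transfers verbatim.

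For deletion, I would combine the two preceding ideas. The deletion of $x$ begins at the cell $i$ actually storing $x$, and the recursive refilling process moves strictly forward, halting as soon as it reaches an empty cell. Because $x$ was retrievable we have $i\ge h(x)$ and $i<e$, and because the refilling halts at the first empty cell it encounters from $i$ onward, it halts at a cell at position $\le e$. Hence the whole deletion touches only cells in $[i,e]\subseteq[h(x),e]$, again costing $O(R)$. Combining the three cases gives the stated bound.

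I do not expect a real technical obstacle here; the statement is essentially a bookkeeping consequence of the invariant ``every cell between $h(y)$ and the storage location of $y$ is occupied.'' The one subtle point to verify carefully is that this invariant really is preserved by the refilling procedure used in deletion, since otherwise the reduction of successful search and deletion to the unsuccessful-search scan would break down; this is exactly what the recursive refill described just before the theorem is designed to guarantee, so I would cite that discussion rather than re-derive it.
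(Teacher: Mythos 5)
Your proposal is correct and follows essentially the same route as the paper's own justification: you bound unsuccessful search directly, reduce insertion to it, use the occupancy invariant for successful search, and observe that the deletion refill is a forward-only scan that terminates at the first empty cell, with $[i,e]\subseteq[h(x),e]$. The paper states this as a narrative argument rather than a formal proof, but the decomposition and the use of the invariant are the same.
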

With $n$ the nunber of keys and $t$ the size of the table, 
we call $n/t$ the {\em load\/} of our table. We generally assume that the load is bounded from $1$, e.g.,~that the number
of keys is $n \le \frac{2}{3}\tsize$. With a good distribution of
keys, we would then hope that the number of locations from $h(x)$ to
an empty location is $O(1)$.

This classic data structure is one of the most popular implementations of hash
tables, due to its unmatched simplicity and efficiency. 
The practical use of linear probing dates back at least to 1954 to an
assembly program by Samuel, Amdahl, Boehme (c.f. \cite{knuth-vol3}).
On modern
architectures, access to memory is done in cache lines (of much more
than a word), so inspecting a few consecutive values is typically
only slightly worse that a single memory access. Even if the scan straddles a
cache line, the behavior will still be better than a second random
memory access on architectures with prefetching. Empirical
evaluations~\cite{black98linprobe,heileman05linprobe,pagh04cuckoo}
confirm the practical advantage of linear probing over other known
schemes, e.g., chaining, but caution~\cite{heileman05linprobe,thorup12kwise} that
it behaves quite unreliably with weak hash functions. Taken together, these findings form a strong
motivation for theoretical analysis.

Linear probing was shown to take expected constant time for
any operation in 1963 by Knuth~\cite{knuth63linprobe}, in a report which is now
regarded as the birth of algorithm analysis. This analysis, however, 
assumed a truly random hash function. 

A central open question of Wegman and Carter~\cite{wegman81kwise} was
how linear probing behaves with $k$-independence. Siegel and
Schmidt~\cite{schmidt90hashing,siegel95hashing} showed that $O(\lg
n)$-independence suffices for any operation to take expected constant time. 
Pagh et al.~\cite{pagh07linprobe} showed that just $5$-independence
suffices for expected constant operation time. 
They also showed that linear transformations do not suffice, hence that 2-independence is not in itself sufficient.

P{\v a}tra{\c s}cu and Thorup \cite{PT16:kwise} proved
that $4$-independence is not in itself sufficient for
expected constant operation time. They display
a concrete combination of keys and a 4-independent random hash function
where searching certain keys takes super constant expected time. This
shows that the $5$-independence result of Pagh et al.~\cite{pagh07linprobe} 
is best possible. In fact \cite{PT16:kwise} provided a complete 
understanding of linear probing with low independence as summarized in
Table~\ref{tab:lin-probe}. 

Considering loads close to $1$, that is load $(1-\eps)$, P{\v a}tra{\c s}cu and Thorup \cite{patrascu12charhash} proved that the expected operation time
is $O(1/\eps^2)$ with 5-independent hashing, matching the bound of
Knuth~\cite{knuth63linprobe} assuming true randomness. 
The analysis from  \cite{patrascu12charhash} 
also works for something called simple tabulation hashing that is
we shall return to in Section \ref{sec:charhash}.

\begin{table}[t]
  \centering
  \begin{tabular}{|l|c|c|c|c|}
    \hline
    Independence & 2 & 3 & 4 & $\ge 5$ \\
    \hline
    Query time & $\Theta(\sqrt{n})$
              & $\Theta(\lg n)$ & $\Theta(\lg n)$ & $\Theta(1)$ \\
    \hline
    Construction time & $\Theta(n\lg n)$ & $\Theta(n\lg n)$ 
              & $\Theta(n)$ & $\Theta(n)$ \\
    \hline
  \end{tabular}
\caption{Expected time bounds for linear probing with a poor $k$-independent
  hash function. The bounds are worst-case expected, e.g., a lower bound
for the query means that there is a concrete combination of stored set, query 
key, and $k$-independent hash function with this expected search time while the upper-bound means that this is the worst expected time for any such combination. Construction time refers to the worst-case expected total time for inserting $n$   keys starting from an empty table.}
\label{tab:lin-probe}
\end{table}

\section{Linear probing with $5$-independence} 
Below we present the simplified version of the proof from
\cite{patrascu12charhash} of the result from \cite{pagh07linprobe}
that 5-independent hashing suffices for expected constant time with
linear probing. For simplicity, we assume that the load is at most 
$\frac 23$. Thus
we study a set $S$ of $n$ keys stored in a linear probing
table of size $t\geq \frac 32 n$. We assume that $t$ is a power of
two.  

A crucial concept is a {\em run\/} $R$ which is a maximal
interval of filled positions. We have an empty position before $R$,
which means that all keys $x\in S$ landing in $R$ must also hash into $R$
in the sense that $h(x)\in R$. Also, we must have exactly $r=|R|$ keys hashing to $R$ since the
position after $R$ is empty.

By Theorem \ref{thm:linpb} the time it takes for any operation on a
key $q$ is at most proportional to the number of locations from $h(x)$
to the first empty location. We upper bound this number by $r+1$ where
$r$ is the length of the run containing $h(q)$. Here $r=0$ if the
location $h(q)$ is empty. We note that the query key $q$ might itself be in
$R$, and hence be part of the run, e.g., in the case of deletions.

We want to give an expected upper bound on $r$. In order to limit the
number of different events leading to a long run, we focus on dyadic
intervals: a {\em (dyadic) $\ell$-interval\/} is an
interval of length $2^\ell$ of the form $[i2^\ell,(i+1)2^\ell)$ where
$i\in[t/2^\ell]$. Assuming that the hashing maps $S$ uniformly into
$[t]$, we expect $n2^\ell/t\leq \frac 23 2^\ell$ keys to hash into a
given $\ell$-interval $I$. We say that $I$ is ``near-full'' if
at least $\frac34\, 2^\ell$ keys from $S\setminus\{q\}$ hash into $I$. 
We claim that
a long run implies that some dyadic interval of similar size is near-full.
More precisely,
\begin{lemma}\label{lem:run} Consider a run $R$ of 
length $r\geq 2^{\ell+2}$. Then one of the first four $\ell$-intervals
intersecting $R$ must be near-full.
\end{lemma}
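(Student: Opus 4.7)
The plan is to prove this by a direct counting argument, playing a lower bound on the number of keys hashing into $I_1\cup I_2\cup I_3\cup I_4$ against the upper bound implied by none of the four intervals being near-full.

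First I would fix notation. Let $s$ be the leftmost position of $R$, and let $j=\lfloor s/2^\ell\rfloor$, so the first four $\ell$-intervals meeting $R$ are $I_1=[j2^\ell,(j+1)2^\ell),\ldots,I_4=[(j+3)2^\ell,(j+4)2^\ell)$, which together cover $[j2^\ell,(j+4)2^\ell)$. The fundamental property of a run is that the slot at position $s-1$ is empty; by linear probing, no key can be stored in $R$ if its hash value is less than $s$, so every key placed in $R$ hashes into $[s,s+r)$, and a key placed at position $p$ satisfies $s\le h(x)\le p$.

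Next I would obtain the lower bound. Since $r\ge 4\cdot 2^\ell\ge (j+4)2^\ell-s$, the subinterval $[s,(j+4)2^\ell)$ is contained in $R$ and thus completely filled; call its length $m=(j+4)2^\ell-s$. The alignment constraint $j2^\ell\le s<(j+1)2^\ell$ gives $m>3\cdot 2^\ell$, i.e.\ $m\ge 3\cdot 2^\ell+1$. By the observation above, each of the $m$ keys stored at these positions hashes into $[s,(j+4)2^\ell)\subseteq I_1\cup I_2\cup I_3\cup I_4$. So at least $3\cdot 2^\ell+1$ keys of $S$ hash into these four intervals.

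Finally I would derive the contradiction. Assume none of $I_1,\ldots,I_4$ is near-full; then each contains strictly fewer than $\tfrac{3}{4}\,2^\ell$ hashes from $S\setminus\{q\}$, for a total strictly below $3\cdot 2^\ell$. Adding at most one for $q$ itself gives strictly fewer than $3\cdot 2^\ell+1$ keys of $S$ hashing into $I_1\cup\cdots\cup I_4$, contradicting the lower bound. I expect the only delicate point to be this bookkeeping of the ``$-q$'' in the definition of near-full, handled cleanly by the $+1$ that $q$ can contribute; the alignment inequality $m\ge 3\cdot 2^\ell+1$ (rather than just $m\ge 3\cdot 2^\ell$) is what makes this gap-of-one argument close.
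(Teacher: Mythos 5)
Your proof is correct and essentially the same argument as the paper's: both identify the filled prefix $L=[s,(j+4)2^\ell)$ of the run lying inside the union of the first four $\ell$-intervals, observe that $|L|\ge 3\cdot 2^\ell+1$ and that every key stored there also hashes there, and then deduce that one interval receives at least $\tfrac34 2^\ell$ keys from $S\setminus\{q\}$. You merely phrase the last step as a contradiction/contrapositive where the paper uses a direct pigeonhole, and you spell out the alignment inequality that gives the $+1$; the content is the same.
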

\begin{proof} Let $I_0,\ldots,I_3$ be the first four $\ell$-intervals
intersecting $R$.
Then $I_0$ may only have its last end-point in $R$ while $I_1,\ldots,I_3$
are contained in $R$ since $r\geq 4\cdot 2^\ell$. 
In particular, this means that
$L=\left(\bigcup_{i\in [4]}I_i\right)\cap R$ has length at least
$3\cdot 2^\ell +1$.

But $L$ is a prefix
of $R$, so all keys landing in $L$ must hash into $L$.
Since $L$ is full, we must have at least $3\cdot 2^\ell +1$
keys hashing into $L$. Even if this includes the query key $q$, then
we conclude that one of our four intervals $I_i$
must have $3\cdot 2^\ell/4\geq \frac 34 2^\ell$ keys from $S\setminus\{q\}$
hashing into it, implying that $I_i$ is near-full.
\end{proof}
Getting back to our original question, we are considering the
run $R$ containing the hash of the query $q$. 
\begin{lemma}\label{lem:query-runs} If the run containing the hash
of the query key $q$ is of length $r\in
[2^{\ell+2},2^{\ell+3})$, then one of the following 12 consecutive 
$\ell$-intervals is near-full: the $\ell$-interval containing $h(q)$, 
the 8 nearest $\ell$-intervals to its left, and the 3 nearest $\ell$-intervals to its right.
\end{lemma}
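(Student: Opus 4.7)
The plan is to apply Lemma \ref{lem:run} and then do a short interval-counting step to translate ``among the first four $\ell$-intervals intersecting $R$'' into ``inside a small window of $\ell$-intervals around $I^*$'', where $I^*$ denotes the $\ell$-interval containing $h(q)$.

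First I would invoke Lemma \ref{lem:run}: since $r \geq 2^{\ell+2}$, one of the first four $\ell$-intervals $I_0, I_1, I_2, I_3$ intersecting $R$ is near-full. Next I would pin down where $I^*$ sits relative to $I_0$. Because $h(q) \in R$, the interval $I^*$ intersects $R$, and since $I_0$ is by definition the \emph{first} such interval, we have $I^* = I_0 + j$ for some integer $j \geq 0$. To upper-bound $j$, note that $R$ begins somewhere inside $I_0$ and has length $r < 2^{\ell+3} = 8\cdot 2^\ell$, so $R$ is contained in the nine consecutive $\ell$-intervals $I_0, I_0+1, \ldots, I_0+8$; in particular $j \in \{0, 1, \ldots, 8\}$.

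Substituting, each $I_i$ lies at position $I^* + (i - j)$ with $i \in \{0,1,2,3\}$ and $j \in \{0,\ldots,8\}$, so $i - j$ ranges over $\{-8, -7, \ldots, 3\}$. These are precisely the twelve consecutive $\ell$-intervals described in the statement: $I^*$ together with the eight nearest to its left and the three nearest to its right, and the near-full interval handed to us by Lemma \ref{lem:run} must be one of them. No real obstacle is expected; the only subtlety is the asymmetric ``$8$ left vs.\ $3$ right'' count, which reflects that $h(q)$ can land anywhere in $R$ (pushing $I^*$ up to $8$ intervals to the right of $I_0$), while the four near-full candidates $I_0, \ldots, I_3$ always sit at the leftmost end of $R$.
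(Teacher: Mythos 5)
Your argument is correct and matches the paper's own proof: both use Lemma~\ref{lem:run} and then bound the offset of the first $\ell$-interval intersecting $R$ from the one containing $h(q)$ using $r < 2^{\ell+3}$. You spell out the counting a bit more explicitly (tracking $j\in\{0,\ldots,8\}$ and $i-j\in\{-8,\ldots,3\}$), but the idea is identical.
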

\begin{proof} Let $R$ be the run containing $h(q)$. 
To apply Lemma \ref{lem:run}, we want to
show that the first four $\ell$-intervals intersecting $R$ has
to be among the 12 mentioned in Lemma \ref{lem:query-runs}. Since the run $R$
containing $h(q)$ has
length less than $8\cdot 2^\ell$, the first $\ell$-interval
intersecting $R$ can be at most 8 before the one containing $h(q)$. The
3 following intervals are then trivially contained among the 12.
\end{proof}
For our analysis, in the random choice of the hash function $h$, 
we first fix the hash value $h(q)$ of the query key $q$. Conditioned on this 
value of $h(q)$, for each $\ell$, let
$P_{\ell}$ be an upper-bound on the the probability that any 
given $\ell$-interval is near-full. Then the probability that
the run containing $h(q)$ has length $r\in [2^{\ell+2},2^{\ell+3})$ is
bounded by $12 P_{\ell}$. Of course, this only gives us a bound for $r\geq 4$.
We thus conclude that the expected length of the run containing the hash of
the query key $q$ is bounded by

Thus, conditioned on the hash of the query key, for each $\ell$ we are
interested in a bound $P_{\ell}$ on the probability that any 
given $\ell$-interval is near-full. Then the probability that
the run containing $h(q)$ has length $r\in [2^{\ell+2},2^{\ell+3})$ is
bounded by $12 P_{\ell}$. Of course, this only gives us a bound for $r\geq 4$.
We thus conclude that the expected length of the run containing the hash of
the query key $q$ is bounded by
\[3+\sum_{\ell=0}^{\log_2 t} 2^{\ell+3}\cdot 12 P_\ell=O\left(1+\sum_{\ell=0}^{\log_2 t}
 2^{\ell} P_\ell\right).\]
Combined with Theorem \ref{thm:linpb}, we have now proved
\begin{theorem}\label{thm:linpb-anal} Consider storing a set
$S$ of keys in a linear probing table of size $t$ where $t$ is
a power of two. Conditioned on the hash of a key $q$, let
$P_{\ell}$ bound the probability that $\frac 34\,2^\ell$ keys
from $S\setminus\{q\}$ hash to any 
given $\ell$-interval. Then the expected time to 
search, insert, or delete $q$ is bounded by
\[O\left(1+\sum_{\ell=0}^{\log_2 t}
 2^{\ell} P_\ell\right).\]
\end{theorem}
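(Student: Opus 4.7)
The plan is to combine Theorem \ref{thm:linpb} with Lemma \ref{lem:query-runs} via a dyadic decomposition of the length $r$ of the run containing $h(q)$. By Theorem \ref{thm:linpb}, the cost of searching, inserting, or deleting $q$ is bounded by $r+1$, so it suffices to bound $\E[r]$ and absorb the additive constant.

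First I would fix the value of $h(q)$ and work conditionally. Once $h(q)$ is fixed, the twelve $\ell$-intervals identified in Lemma \ref{lem:query-runs} (the $\ell$-interval containing $h(q)$ together with its $8$ left and $3$ right neighbours) are determined deterministically, and by hypothesis each of them is near-full with probability at most $P_\ell$ under the remaining randomness of the hash values of $S\setminus\{q\}$. For each $\ell\geq 0$, Lemma \ref{lem:query-runs} states that the event $r\in[2^{\ell+2},2^{\ell+3})$ implies that at least one of these twelve intervals is near-full, so a union bound gives $\Pr[\,r\in[2^{\ell+2},2^{\ell+3})\,]\leq 12 P_\ell$.

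Next I would turn this tail bound into an expectation by the standard layer-cake argument. Runs with $r<4$ contribute at most $3$ to $\E[r]$, and for $\ell=0,1,\dots,\log_2 t$ the slice $r\in[2^{\ell+2},2^{\ell+3})$ contributes at most $2^{\ell+3}\cdot 12 P_\ell$. Summing,
\[
\E[r\mid h(q)]\;\leq\; 3 + \sum_{\ell=0}^{\log_2 t} 2^{\ell+3}\cdot 12\, P_\ell
\;=\; O\!\left(1+\sum_{\ell=0}^{\log_2 t} 2^{\ell} P_\ell\right).
\]
Since this bound is independent of the conditioning value of $h(q)$, taking expectation over $h(q)$ and adding the $+1$ from Theorem \ref{thm:linpb} yields the claimed bound on the expected operation time.

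There is no real obstacle here: all of the work has been done in the two preceding lemmas. The only subtlety to flag is that the query key $q$ may itself lie in the run $R$ (as happens for deletions), which is why ``near-full'' is defined relative to $S\setminus\{q\}$; Lemma \ref{lem:run} already absorbs the $+1$ coming from $q$ into the constants (the slack between $3\cdot 2^\ell+1$ keys landing in $L$ and $\tfrac34\cdot 2^\ell$ per interval), so the union bound above genuinely bounds the correct probability $P_\ell$.
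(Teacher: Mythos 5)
Your proposal is correct and follows essentially the same route as the paper: fix $h(q)$, apply Lemma \ref{lem:query-runs} with a union bound over the twelve candidate $\ell$-intervals to get $\Pr[\,r\in[2^{\ell+2},2^{\ell+3})\,]\leq 12P_\ell$, sum dyadic layers to bound $\E[r]$, and invoke Theorem \ref{thm:linpb}. The remark about $q$ possibly lying in its own run and the role of $S\setminus\{q\}$ in the definition of near-full is exactly the subtlety the paper handles in Lemma \ref{lem:run}.
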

We note that Theorem \ref{thm:linpb-anal} does not mention the
size of $S$. However, as mentioned earlier, 
with a uniform distribution, the expected number of elements hashing
to 
an $\ell$-interval is $\leq 2^\ell |S|/t$, so for $P_\ell$ to 
be small, we want this expectation to be significantly smaller than  
$\frac 34\,2^\ell$. Assuming $|S|\leq \frac 23 t$, the expected
number is $\frac 23\,2^\ell$.

To get constant expected cost for linear probing, we are going to
assume that the hash function used is $5$-independent. This means
that no matter the hash value $h(q)$ of $q$, conditioned on $h(q)$,
the keys from $S\setminus \{q\}$ are hashed 4-independently.
This means that if $X_x$ is the indicator variable for
a key $x\in S\setminus\{q\}$ hashing to a given interval $I$,
then the variables $X_x,\ x\in S\setminus\{q\}$ are 4-wise independent.

\subsection{Fourth moment bound}
The probabilistic tool we shall use here to analyze 4-wise independent
variables is a 4\th moment bound.  For
$i\in[n]$, let $X_i\in[2]=\{0,1\}$, $p_i=\Pr[X_i=1]=\E[X_i]$, $X =
\sum_{i\in [n]} X_i$, and $\mu=\E[X]=\sum_{i\in[n]} p_i$.  Also
$\sigma_i^2=\Var[X_i]=\E[(X_i-p_i)^2]=p_i(1-p_i)^2+(1-p_i)p_i^2=p_i-p_i^2$. As
long as the $X_i$ are pairwise independent, the variance of the sum is
the sum of the variances, so we define
\[\sigma^2=\Var[X]=\sum_{i\in[n]}\Var[X_i]=\sum_{i\in [n]}\sigma^2_i\leq \mu.\]
By Chebyshev's inequality, we have 
\begin{equation}\label{eq:2bound}
\Pr[|X-\mu|\geq d\sqrt\mu]\leq \Pr[|X-\mu|\geq d\sigma]\leq 1/d^2.
\end{equation}
We are going to prove a stronger bound if the variables are 4-wise independent
and $\mu\geq 1$ (and which is only stronger if $d\geq 2$).
\begin{theorem}\label{thm:4bound}
If the variables $X_0,\ldots,X_{n-1}\in\{0,1\}$ 
are 4-wise independent, $X=\sum_{i\in [n]} X_i$, and $\mu=\E[X]\geq 1$,
then 
\[\Pr[|X-\mu|\geq d\sqrt\mu]\leq  4/d^4.\]
\end{theorem}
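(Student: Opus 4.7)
The plan is to imitate the Chebyshev argument but with the fourth moment: by Markov's inequality applied to the nonnegative random variable $(X-\mu)^4$, we get
\[\Pr[|X-\mu|\geq d\sqrt\mu]=\Pr[(X-\mu)^4\geq d^4\mu^2]\leq \frac{\E[(X-\mu)^4]}{d^4\mu^2}.\]
So the whole task reduces to showing $\E[(X-\mu)^4]\leq 4\mu^2$ under the 4-wise independence hypothesis and the assumption $\mu\geq 1$. This is exactly where the hypothesis that the $X_i$ are only 4-wise (rather than fully) independent is used, since the expansion of $(X-\mu)^4$ involves at most four of the variables at a time.

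First I would center the variables by setting $Y_i=X_i-p_i$, so that $\E[Y_i]=0$, the $Y_i$'s are still 4-wise independent, $|Y_i|\leq 1$, and $\E[Y_i^2]=\sigma_i^2\leq p_i$. Because $|Y_i|\leq 1$, also $\E[Y_i^4]\leq \E[Y_i^2]$. Now I would expand
\[\E[(X-\mu)^4]=\sum_{i_1,i_2,i_3,i_4}\E[Y_{i_1}Y_{i_2}Y_{i_3}Y_{i_4}].\]
By 4-wise independence and the zero mean, any tuple in which some index appears exactly once contributes $0$. The only surviving index patterns are therefore the ``all equal'' pattern $(i,i,i,i)$ and the ``two pairs'' pattern with two distinct indices each appearing twice.

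The first pattern contributes $\sum_i \E[Y_i^4]\leq \sum_i \sigma_i^2 = \sigma^2\leq\mu$. For the second pattern, each unordered pair $\{i,j\}$ with $i\neq j$ is counted $\binom{4}{2}=6$ times as ordered tuples, and by pairwise independence $\E[Y_i^2 Y_j^2]=\sigma_i^2\sigma_j^2$. This gives a contribution of
\[3\sum_{i\neq j}\sigma_i^2\sigma_j^2\leq 3\Bigl(\sum_i\sigma_i^2\Bigr)^2=3\sigma^4\leq 3\mu^2.\]
Adding the two pieces, $\E[(X-\mu)^4]\leq \mu+3\mu^2$, and here the hypothesis $\mu\geq 1$ lets me absorb the linear term, giving $\mu+3\mu^2\leq 4\mu^2$. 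Plugging back into the Markov estimate yields the claimed $4/d^4$ bound.

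The only real subtlety is the bookkeeping of the multinomial coefficient in the ``two pairs'' term (it is easy to lose or double the factor of $3$), and noticing that the ``all equal'' contribution must be absorbed into the quadratic term using $\mu\geq 1$ — without that assumption one would only get a bound of the form $(1+3\mu)/d^4$, which is weaker when $\mu$ is small.
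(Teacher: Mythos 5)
Your proposal is correct and follows essentially the same route as the paper's proof: expand the fourth central moment, kill terms with a singleton index using 4-wise independence, bound the diagonal contribution by $\sigma^2\leq\mu$ and the two-pairs contribution by $3\sigma^4\leq 3\mu^2$, absorb the linear term using $\mu\geq 1$, and finish with Markov. Your bookkeeping ($3\sum_{i\neq j}$) is algebraically identical to the paper's ($\binom{4}{2}\sum_{a<b}$), so there is no substantive difference.
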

\begin{proof}
Note that $(X-\mu)= \sum_{i\in[n]} (X_i-p_i)$. By linearity of
expectation, the fourth moment is:
\[ \E[(X-\mu)^4] = \E[ \big( \sum_i X_i-p_i \big)^4 ]
= \sum_{i,j,k,l\in[n]} \E\big[ (X_i-p_i)
  (X_j-p_j)(X_k-p_k)(X_l-p_l) \big].
\]
Our goal is to get a good bound on the fourth moment.

Consider a term $\E\big[ (X_i-p_i)
  (X_j-p_j)(X_k-p_k)(X_l-p_l) \big]$. The at most 4 distinct variables 
are completely independent. Suppose one of them, say, $X_i$, appears
only once. By definition, $\E\big[ (X_i-p_i)\big]=0$, and since
it is independent of the other factors, we
get $\E\big[ (X_i-p_i)
  (X_j-p_j)(X_k-p_k)(X_l-p_l) \big]=0$. We can therefore ignore all
terms where any variable appears once. 
We may therefore assume that
each variables appears either twice or 4 times. In terms with
variables appearing twice, we have two indices $a<b$ where $a$ is assigned to two of $i,j,k,l$, while $b$ is assigned to the other two, yielding
$4 \choose 2$ combinations based on $a<b$. Thus
we get 
\begin{align*}\E[(X-\mu)^4]&= \sum_{i,j,k,l\in[n]} \E\big[ (X_i-p_i)
  (X_j-p_j)(X_k-p_k)(X_l-p_l) \big]\\
&= \sum_{i} \E\big[ (X_i-p_i)^4 \big]
+ {4\choose 2}\sum_{a<b}\left( \E\big[ (X_a-p_a)^2\big]
 \E\big[  (X_b-p_b)^2 \big]\right).
\end{align*}
Considering any multiplicity $m=2,3,4,5,\ldots$, we
have 
\begin{equation}\label{eq:vari}
\E[(X_i-p_i)^m]\leq\E[(X_i-p_i)^2]=\sigma_i^2.
\end{equation}
To see this, note that $X_i,p_i\in[0,1]$. Hence
$|X_i-p_i|\leq 1$, so $(X_i-p_i)^{m-2}\leq 1$, and therefore
$(X_i-p_i)^m\leq (X_i-p_i)^2$.
Continuing our calculation, we get
\begin{align}\E[(X-\mu)^4]
&= \sum_{i} \E\big[ (X_i-p_i)^4 \big]
+ {4\choose 2}\sum_{a<b}\left( \E\big[ (X_a-p_a)^2\big]
 \E\big[  (X_b-p_b)^2 \big]\right)\nonumber\\
&\leq \sum_{i} \sigma_i^2
+ {4\choose 2}\sum_{a<b}\sigma_a^2\sigma_b^2\nonumber\\
&\leq \sigma^2
+ 3 \left(\sum_{i} \sigma_i^2\right)^2\nonumber\\
&=\sigma^2
+ 3 \sigma^4.\label{eq:4moment}
\end{align}
Since $\sigma^2\leq\mu$ and $\mu\geq 1$, we
get 
\begin{equation}\label{eq:fourth}
\E[(X-\mu)^4]\leq \mu
+ 3 \mu^2\leq 4\mu^2.
\end{equation}
which is our desired bound on the fourth moment.

by Markov's inequality, 
\[\Pr[|X-\mu|\geq d\sqrt\mu]=\Pr[(X-\mu)^4\geq (d\sqrt\mu)^4]\leq\E[(X-\mu)^4]/(d\sqrt\mu)^4\leq 4/d^4.\]
This completes the proof of Theorem \ref{thm:4bound}.
\end{proof}
We are now ready to prove the 5-independence suffices for
linear probing.
\begin{theorem}\label{thm:linpb-5} Suppose
we use a 5-independent hash function $h$ to store a set $S$ of 
$n$ keys in a linear probing table of size $t\geq \frac 32 n$ where
$t$ is a power of two. Then it takes expected constant time to
search, insert, or delete a key.
\end{theorem}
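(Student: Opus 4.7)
The plan is to combine Theorem~\ref{thm:linpb-anal} with the fourth moment technology behind Theorem~\ref{thm:4bound}, applied to an arbitrary $\ell$-interval. The whole task reduces to producing a bound $P_\ell$ on the probability that a fixed $\ell$-interval $I$ is near-full, conditioned on $h(q)$, that is small enough to make $\sum_\ell 2^\ell P_\ell$ converge.

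First I would introduce the indicators $X_x=[h(x)\in I]$ for $x\in S\setminus\{q\}$ and set $X=\sum_x X_x$. Since $h$ is $5$-independent, conditioning on $h(q)$ leaves the hash values of any four of the remaining keys jointly independent and uniform on $[t]$; hence the $X_x$ are $4$-wise independent with $\E[X_x]=2^\ell/t$, and $\mu=\E[X]\le n\cdot 2^\ell/t\le \tfrac{2}{3}\,2^\ell$ by the load assumption $n\le \tfrac{2}{3}t$.

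Next I would translate the near-full event $\{X\ge \tfrac{3}{4}\,2^\ell\}$ into a deviation: for it to occur we need $X-\mu\ge \tfrac{3}{4}\,2^\ell-\mu\ge \tfrac{1}{12}\,2^\ell$. The calculation in the proof of Theorem~\ref{thm:4bound} yields the fourth-moment estimate $\E[(X-\mu)^4]\le \sigma^2+3\sigma^4\le \mu+3\mu^2=O(2^{2\ell})$ using $\mu=O(2^\ell)$. Markov's inequality then gives
\[
P_\ell \;\le\; \frac{\E[(X-\mu)^4]}{(2^\ell/12)^4} \;=\; O(1/2^{2\ell}).
\]
Plugging into Theorem~\ref{thm:linpb-anal}, $\sum_{\ell=0}^{\log_2 t} 2^\ell P_\ell=\sum_\ell O(2^{-\ell})=O(1)$, so the expected cost of any operation on $q$ is $O(1)$.

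The only subtle point is that Theorem~\ref{thm:4bound} was packaged under the hypothesis $\mu\ge 1$; since small values of $\ell$ (or small $|S|$) may violate that, I would instead use the raw moment bound from equation~\req{eq:4moment}, which holds unconditionally and still produces the needed $O(1/2^{2\ell})$ decay. Conceptually, the near-full threshold sits $\Theta(2^{\ell/2})$ standard deviations above $\mu$, and the fourth moment converts this into $1/2^{2\ell}$ tail decay---exactly the right amount to beat the $2^\ell$ factor in the sum. I do not foresee any genuine obstacle beyond this bit of bookkeeping.
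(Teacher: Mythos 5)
Your proof follows exactly the same route as the paper: condition on $h(q)$, observe that the indicators $X_x$ for $x\in S\setminus\{q\}$ are $4$-wise independent, apply the fourth-moment bound to the near-full event (which is a $\ge\frac1{12}2^\ell$ deviation from $\mu\le\frac23 2^\ell$) to obtain $P_\ell=O(1/2^{2\ell})$, and plug into Theorem~\ref{thm:linpb-anal}. Your observation that the hypothesis $\mu\ge 1$ in Theorem~\ref{thm:4bound} may fail for small $\ell$, and that one should instead work directly from the raw moment bound~\req{eq:4moment} together with $\mu\le\frac23 2^\ell$, is a correct refinement of a point the paper glosses over.
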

\begin{proof} First we fix the hash of the query key $q$. 
To apply Theorem \ref{thm:linpb-anal}, we need to find a bound
$P_{\ell}$ on the probability that $\frac 34\,2^\ell$ keys
from $S\setminus\{q\}$ hash to any given $\ell$-interval $I$. For each
key $x\in S\setminus\{q\}$, 
let $X_x$ be the indicator variable for $h(x)\in I$. 
Then $X=\sum_{x\in S\setminus\{q\}} X_x$ is the number
of keys hashing to $I$, and the expectation of $X$ is 
$\mu= \E[X]= n 2^\ell/t\leq \frac 23 2^\ell$.
Our concern is the event that
\[X\geq \frac 34\,2^\ell\implies X-\mu \geq \frac1{12}2^\ell>\frac 1{10}
\sqrt{2^\ell\mu}.\]
Since
$h$ is 5-independent, the $X_x$ are 4-independent, so
by Theorem \ref{thm:4bound}, we get
\[\Pr\left[X \geq \frac 34\,2^\ell\right]\leq 40000/2^{2\ell}=O(1/2^{2\ell}).\]
Thus we can use $P_\ell=O(1/2^{2\ell})$ in Theorem \ref{thm:linpb-anal}, 
and then we get that the expected
operation cost is
\[O\left(1+\sum_{\ell=0}^{\log_2 t}
 2^{\ell} P_\ell\right)=O\left(1+\sum_{\ell=0}^{\log_2 t}
 2^{\ell}/2^{2\ell}\right)=O(1).\]
\end{proof}

\begin{problem} Above we assumed that the range of our
hash function is $[t]$ where $t$ is a power of two. As suggested in the introduction,
we use a hash function based on a degree 4 polynomial over a prime
field $\mathbb Z_p$ where $p\gg 1$, that is, we pick $5$ independent random
coefficients $a_0,\ldots,a_4\in[p]$, and define the hash function
$h':[p]\rightarrow [t]$ by
\[h'(x) = \Big( \big( a_4 x^4 + \cdots + a_1 x + a_0 \big)
        \bmod{p} \Big)\bmod t.\]
Then for any distinct $x_0,\ldots,x_4$, the hash values $h'(x_0),\ldots,h'(x_4)$
are independent. Moreover, we have almost uniformity in the sense
that for any $x\in[p]$ and $y\in [t]$, we have $1/t-1/p<\Pr[h'(x)=y]<1/t+1/p$.

Prove that Theorem \ref{thm:linpb-5} still holds with constant operation time
if $p\geq 24 t$.
\end{problem}

\begin{problem} Assuming full randomness, use Chernoff bounds to
prove that the longest run in the hash table has length $O(\log n)$ with
probability at least $1-1/n^{10}$.

{\em Hint.} You can use Lemma \ref{lem:run} to prove that if there is run of 
length $r\geq 2^{\ell+2}$, then some $\ell$-interval 
is near-full. You can then pick $\ell=C\ln n$ for some large enough constant $C$.
\end{problem}

\begin{problem} Using Chebyshev's inequality, show that with
3-independent hashing, the expected operation time is $O(\log n)$.
\end{problem}

\subsection{Fourth moment and simple tabulation hashing}\label{sec:charhash}
In the preceding analysis we use the 5-independence of the hash
function as follows. First we fix the hash of the query
key. Conditioned on this fixing, we still have 4-independence in the
hashes of the stored keys, and we use this 4-independence to prove the
4th moment bound \req{eq:fourth} on the number stored keys hashing to
any given interval. This was all we needed about the hash function to
conclude that linear probing takes expected constant time per
operation.

P{\v a}tra{\c s}cu and Thorup \cite{patrascu12charhash} 
have proved that something called simple tabulation hashing, that is only
3-independent, within a constant factor provides the same 4th moment 
bound \req{eq:fourth} on the number of stored keys hashing 
to any given interval conditioned on a fixed hash of the query key.
 Linear probing therefore also works
in expected constant time with simple tabulation. This is important
because simple tabulation is 10 times faster than 5-independence implemented 
with a polynomial as in \req{eq:prime}.

Simple tabulation hashing was invented by Zobrist \cite{zobrist70hashing} in 1970 for
chess computers. The basic idea is to view a key $x$ as
consisting of $c$ characters for some constant $c$, e.g., a 32-bit key
could be viewed as consisting of $c=4$ characters of 8 bits.
We initialize $c$ tables $T_1,\ldots,T_c$ mapping characters to 
random hash values that are bit-strings of a certain length.
A key $x=(x_1, ..., x_c)$ is then hashed to $T_1[x_1]\oplus\cdots\oplus T_c[x_c]$
where $\oplus$ denotes bit-wise xor.

\section{The $k$-th moment} 
The 4\th moment bound used above generalizes to any even moment. First
we need
\begin{theorem}\label{thm:k-moment}
Let $X_{0},\ldots,X_{n-1}\in \{0,1\}$ be $k$-wise independent variables for
some (possibly odd) $k\geq 2$. Let $p_i=\Pr[X_i=1]$ and $\sigma_i^2=\Var[X_i]=p_i-p_i^2$. Moreover, let $X=\sum_{i\in[n]} X_i$, $\mu=\E[X]=\sum_{i\in[n]} p_i$,
and $\sigma^2=\Var[X]=\sum_{i\in[n]} \sigma_i^2$. Then
\[\E[(X-\mu)^k]\leq O(\sigma^2+\sigma^k)=O(\mu+\mu^{k/2}).\]
\end{theorem}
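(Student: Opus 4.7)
My plan is to mirror the proof of Theorem \ref{thm:4bound}, generalizing the combinatorial accounting from four factors to $k$. First I would expand
\[\E[(X-\mu)^k] \;=\; \sum_{(i_1,\ldots,i_k)\,\in\,[n]^k}\E\Big[\prod_{j=1}^{k}(X_{i_j}-p_{i_j})\Big],\]
and observe that any ordered tuple in which some index appears with multiplicity exactly one contributes zero: since the tuple involves at most $k$ distinct variables, $k$-wise independence lets the expectation factor, and the isolated factor has expectation $\E[X_i-p_i]=0$. This restricts the sum to tuples whose underlying multiset has every index appearing at least twice, and such a tuple can involve at most $\lfloor k/2 \rfloor$ distinct indices.

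Next I would group the surviving tuples by the set partition $\pi$ of $\{1,\ldots,k\}$ they induce (two positions lie in the same block iff they hold the same index). Fix such a $\pi$ with $j=|\pi|$ blocks of sizes $m_1,\ldots,m_j\geq 2$. The tuples inducing $\pi$ are in bijection with injections from the blocks of $\pi$ to $[n]$, and for each such injection sending block $b$ to index $a_b$, $k$-wise independence (legal because $j\leq k$) factors the expectation as $\prod_b \E[(X_{a_b}-p_{a_b})^{m_b}]$. Applying \eqref{eq:vari} in absolute value yields $|\E[(X_{a_b}-p_{a_b})^{m_b}]|\leq \sigma_{a_b}^2$, so the total contribution of $\pi$ is bounded by
\[\sum_{\substack{a_1,\ldots,a_j\\\text{distinct}}}\prod_{b=1}^{j}\sigma_{a_b}^2 \;\leq\; \Big(\sum_i \sigma_i^2\Big)^{j} \;=\; \sigma^{2j}.\]

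Finally I would sum over all set partitions of $\{1,\ldots,k\}$ whose blocks each have size $\geq 2$; since $k$ is a constant, their number is $O(1)$, giving $|\E[(X-\mu)^k]|\leq \sum_{j=1}^{\lfloor k/2\rfloor} C_{k,j}\,\sigma^{2j}$ for constants $C_{k,j}=O(1)$ depending only on $k$. A quick case analysis closes the bound: for $\sigma\geq 1$ every $\sigma^{2j}$ with $2j\leq k$ is at most $\sigma^k$, while for $\sigma\leq 1$ every $\sigma^{2j}$ with $j\geq 1$ is at most $\sigma^2$, so the sum is $O(\sigma^2+\sigma^k)$. Converting via $\sigma^2\leq \mu$ then gives $O(\mu+\mu^{k/2})$. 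The only real obstacle is the bookkeeping—making sure each ordered tuple is counted exactly once and that $k$-wise independence is never invoked on more than $k$ distinct variables at a time—and grouping by the induced set partition takes care of both concerns cleanly.
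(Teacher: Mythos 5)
Your proposal is correct and follows essentially the same route as the paper: expand, kill all tuples with a singleton multiplicity by independence and $\E[X_i-p_i]=0$, observe at most $\lfloor k/2\rfloor$ distinct indices survive, bound each surviving term via $\sigma_i^2$, and sum over multiplicity patterns. The only cosmetic difference is in the bookkeeping: you group surviving tuples by the set partition of positions they induce and bound each group's contribution by $\sigma^{2j}$, whereas the paper fixes the set of $c$ distinct indices and uses the cruder count of $c^k$ tuples per such set, arriving at $\sum_c \frac{c^k}{c!}\sigma^{2c}$. Both give $O(1)$ constants once $k$ is fixed. One small point in your favor: you apply the moment bound with an explicit absolute value, $|\E[(X_a-p_a)^{m}]|\leq\sigma_a^2$, which is the right way to justify bounding a product whose factors could be negative when some $m_b$ is odd; the paper's version states the one-sided inequality and then multiplies, which is valid in effect but less carefully worded.
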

\begin{proof} The proof is a simple generalization of the
proof of Theorem \ref{thm:4bound} up to \req{eq:4moment}.
We have
\[(X-\mu)^k=\sum_{i_0,\ldots,i_{k-1}\in[n]} \left((X_{i_0}-p_{i_0})(X_{i_1}-p_{i_1})\cdots (X_{i_{k-1}}-p_{i_{k-1}})\right)\]
By linearity of expectation,
\[\E[(X-\mu)^k]=\sum_{i_0,\ldots,i_{k-1}\in[n]} \E\left[\left((X_{i_0}-p_{i_0})(X_{i_1}-p_{i_1})\cdots (X_{i_{k-1}}-p_{i_{k-1}})\right)\right]\]
We now consider a specific term 
\[\left((X_{i_0}-p_{i_0})(X_{i_1}-p_{i_1})\cdots (X_{i_{k-1}}-p_{i_{k-1}})\right)\]
Let $j_0<j_1<\cdots <j_{c-1}$ be the distinct indices among $i_0,i_1,\ldots,i_{n-1}$, and let $m_h$ be the multiplicity of $j_h$. Then
\begin{align*}&\left((X_{i_0}-p_{i_0})(X_{i_1}-p_{i_1})\cdots (X_{i_{k-1}}-p_{i_{k-1}})\right)\\
&\hspace{3em}=\left((X_{j_0}-p_{j_0})^{m_0}(X_{j_1}-p_{j_1})^{m_1}\cdots (X_{j_{c-1}}-p_{j_{c-1}})^{m_{c-1}}\right).\end{align*}
The product involves at most $k$ different variables so they are all
independent, and therefore 
\begin{align*}
&\E\left[\left((X_{j_0}-p_{j_0})^{m_0}(X_{j_1}-p_{j_1})^{m_1}\cdots (X_{j_{c-1}}-p_{j_{c-1}})^{m_{c-1}}\right)\right]\\
&\hspace{3em}=\E\left[(X_{j_0}-p_{j_0})^{m_0}\right]\;\E\left[(X_{j_1}-p_{j_1})^{m_1}\right]\cdots \E\left[(X_{j_{c-1}}-p_{j_{c-1}})^{m_{c-1}}\right]\end{align*}
Now, for any $i\in[n]$, $\E[X_i-p_i]=0$, so if any multiplicity is 1, the
expected value is zero. We therefore only need to count terms where
all multiplicities $m_h$ are at least $2$. The sum of multiplicities
is $\sum_{h\in [c]}m_h=k$, so we conclude that there are $c\leq k/2$
distinct indices $j_0,\ldots,j_{c-1}$. Now by \req{eq:vari},
\[\E\left[(X_{j_0}-p_{j_0})^{m_0}\right]\;\E\left[(X_{j_1}-p_{j_1})^{m_1}\right]\cdots \E\left[(X_{j_{c-1}}-p_{j_{c-1}})^{m_{c-1}}\right]\leq
 \sigma^2_{j_0}\sigma^2_{j_1}\cdots \sigma^2_{j_{c-1}}.\]
We now want to bound the number tuples 
$(i_0,i_1,\ldots,i_{k-1})$ that have the same $c$ distinct
indices $j_0<j_1<\cdots<j_{c-1}$. A crude upper bound is
that we have $c$ choices for each $i_h$, hence $c^k$ tuples.
We therefore conclude that
\begin{align*}
\E[(X-\mu)^k]&=\sum_{i_0,\ldots,i_{k-1}\in[n]} \E\left[\left((X_{i_0}-p_{i_0})(X_{i_1}-p_{i_1})\cdots (X_{i_{k-1}}-p_{i_{k-1}})\right)\right]\\
&\leq\sum_{c=1}^{\floor{k/2}}\left(
c^k\sum_{0\leq j_0<j_1<\cdots<j_{c-1}<n} \sigma^2_{j_0}\sigma^2_{j_1}\cdots \sigma^2_{j_{c-1}}\right)\\
&\leq\sum_{c=1}^{\floor{k/2}}\left(
\frac{c^k}{c!}\sum_{j_0,j_1,\ldots,j_{c-1}\in[n]} \sigma^2_{j_0}\sigma^2_{j_1}\cdots \sigma^2_{j_{c-1}}\right)\\
&\leq\sum_{c=1}^{\floor{k/2}}\left(
\frac{c^k}{c!}\left(\sum_{j\in[n]}\sigma^2_{j}\right)^c\right)\\
&=\sum_{c=1}^{\floor{k/2}}\left(
\frac{c^k}{c!}\sigma^{2c}\right)\\
&=O\left(\sigma^2+\sigma^{k}\right)=O\left(\mu+\mu^{k/2}\right).\\
\end{align*}
Above we used that $c,k=O(1)$ hence that, e.g., $c^k=O(1)$.
This completes the proof of Theorem~\ref{thm:k-moment}.
\end{proof}
For even moments, we now get a corresponding error probability bound
\begin{corollary}\label{cor:k-moment}
Let $X_{0},\ldots.X_{n-1}\in \{0,1\}$ be $k$-wise independent variables for
some even constant $k\geq 2$. Let $p_i=\Pr[X_i=1]$ and $\sigma_i^2=\Var[X_i]=p_i-p_i^2$. Moreover, let $X=\sum_{i\in[n]} X_i$, $\mu=\E[X]=\sum_{i\in[n]} p_i$,
and $\sigma^2=\Var[X]=\sum_{i\in[n]} \sigma_i^2$. 
If $\mu=\Omega(1)$, then
\[\Pr[|X-\mu|\geq d\sqrt\mu]= O(1/d^{k}).\]
\end{corollary}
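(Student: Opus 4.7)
The plan is to follow the exact template used in the proof of Theorem \ref{thm:4bound}: bound the $k$\th central moment using Theorem \ref{thm:k-moment} and then apply Markov's inequality to the nonnegative quantity $(X-\mu)^k$. The assumption that $k$ is even is what makes this clean, because then $(X-\mu)^k = |X-\mu|^k \geq 0$, so Markov applies directly and the event $|X-\mu| \geq d\sqrt{\mu}$ is literally the event $(X-\mu)^k \geq d^k \mu^{k/2}$.

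First I would invoke Theorem \ref{thm:k-moment}, which under the stated hypotheses yields $\E[(X-\mu)^k] = O(\mu + \mu^{k/2})$. Next I would use $\mu = \Omega(1)$ to absorb the linear term: if $\mu \geq c$ for some constant $c>0$, then $\mu^{k/2}/\mu = \mu^{k/2-1}$ is bounded below by a positive constant (it equals $1$ when $k=2$, and is at least $\min(1,c^{k/2-1})$ when $k\geq 4$), so $\mu + \mu^{k/2} = O(\mu^{k/2})$ and hence $\E[(X-\mu)^k] = O(\mu^{k/2})$.

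Finally I would apply Markov:
\[\Pr[|X-\mu| \geq d\sqrt\mu] = \Pr[(X-\mu)^k \geq d^k \mu^{k/2}] \leq \frac{\E[(X-\mu)^k]}{d^k \mu^{k/2}} = O(1/d^k).\]
I do not expect any real obstacle; the only subtle points are that evenness of $k$ is essential for identifying $|X-\mu|^k$ with $(X-\mu)^k$ so that Theorem \ref{thm:k-moment} gives what we need, and that the hypothesis $\mu = \Omega(1)$ is precisely what is needed to ensure $\mu^{k/2}$ dominates $\mu$ so that the $\mu$ term does not ruin the bound when $\mu$ is small.
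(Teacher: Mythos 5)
Your proof is correct and follows exactly the same route as the paper: apply Theorem \ref{thm:k-moment} to bound $\E[(X-\mu)^k]$, note that evenness of $k$ makes $(X-\mu)^k$ nonnegative so Markov's inequality applies to it directly, and use $\mu=\Omega(1)$ to ensure $\mu+\mu^{k/2}=O(\mu^{k/2})$. The only difference is that you spell out the last absorption step explicitly, which the paper leaves implicit.
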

\begin{proof} By Theorem \ref{thm:k-moment} and Markov's inequality,
we get 
\begin{align*}
\Pr[|X-\mu|\geq d\sqrt\mu]&=\Pr[(X-\mu)^k\geq d^k\mu^{k/2}]\\
&\leq\frac{\E[(X-\mu)^k]}{d^k\mu^{k/2}}\\
&=\frac{O\left(\mu+\mu^{\floor{k/2}}\right)}{d^k\mu^{k/2}}\\
&=O(1/d^k).
\end{align*}
\end{proof}
\begin{problem} In the proofs of this section, where and why do we need
that (a) $k$ is a constant and (b) that $k$ is even.
\end{problem}

\section{Bloom filters via linear probing}
We will now show how we can reduce the space of a linear probing table
if we are willing to allow for a small chance of false positives, that
is, the table attemps to answer if a query $q$ is in the current
stored set $S$. If it answers ``no'', then $q\not\in S$. If $q\in S$,
then it always answers ``yes''. However, even if $q\not\in S$, then
with some probability $\leq P$, the table may answer ``yes''. Bloom
\cite{bloom70filter} was the first to suggest creating such a filter
using less space than one giving exact answers. Our implementation
here, using linear probing, is completely different. The author
suggested this use of linear probing to various people in the late
90ties, but it was never written down.

To create a filter, we use a universal hash function $s:[u]\fct[2^b]$.
We call $s(x)$ the signature of $x$. The point is that $s(x)$ should
be much smaller than $x$, that is, $b\ll \log_2 u$. The linear probing
array $T$ is now only an array of $t$ signatures. We still use the hash
function $h:[u]\fct[t]$ to start the search for a key in the array. Thus, to check if a key $q$ is positive in
the filter, we look for $s(q)$ among the signatures in $T$ from
location $h(q)$ and onwards until the first empty locatition. If
$s(q)$ is found, we report ``yes''; otherwise ``no''. If we want to
include $q$ to the filter, we only do something if $s(q)$ was not
found. Then we place $s(q)$ it in the first empty location. Our filter
does not support deletion of keys (c.f. Problem \ref{pr:filter-no-deletion}).

\begin{theorem}\label{thm:filter} Assume that the hash function $h$ and the signature
function $s$ are independent, that $h$ is 5-independent, and that $s$
is universal. Then the probability of a false positive on a given key
$q\not\in S$ is $O(1/2^b)$.
\end{theorem}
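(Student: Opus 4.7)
The plan is to reduce the false-positive probability to an expected run length by coupling the signature filter with ordinary linear probing on $S$ using the same hash $h$. The coupling will let me replace the filter's $s$-dependent probe region with the $h$-determined standard run at $h(q)$, after which universality of $s$ and the expected-$O(1)$ run length from Theorem \ref{thm:linpb-5} finish the argument.

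First I would prove a monotonicity lemma: if $F^{\mathrm{f}}_i$ and $F^{\mathrm{s}}_i$ denote the sets of filled slots after inserting the first $i$ keys in the filter and in ordinary linear probing (same order, same $h$), then $F^{\mathrm{f}}_i \subseteq F^{\mathrm{s}}_i$. By induction, during insertion $i{+}1$ the filter's scan from $h(x_{i+1})$ either finds a matching signature and does nothing, or halts at some empty slot $p \le p^{\mathrm{s}}$ (because every $F^{\mathrm{s}}_i$-empty slot is also $F^{\mathrm{f}}_i$-empty). In the latter case, standard probing fills the whole interval $[h(x_{i+1}), p^{\mathrm{s}}]$, so $p \in F^{\mathrm{s}}_{i+1}$.

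Using monotonicity I would prove a witness claim: any false positive picks out a key $x \in S$ with $s(x) = s(q)$ whose standard placement $p^{\mathrm{s}}(x)$ lies in the standard run of $h(q)$ and satisfies $p^{\mathrm{s}}(x) \ge h(q)$. Indeed, if $T[p] = s(q)$ with $p \ge h(q)$ and $[h(q), p]$ filled in the filter, let $x$ be the key whose insertion placed $T[p]$; then $s(x) = s(q)$ and $h(x) \le p$. Monotonicity makes $[h(q), p]$ filled in the standard configuration too. The standard placement satisfies $p^{\mathrm{s}}(x) \ge p$, and $[h(x), p^{\mathrm{s}}(x)]$ is filled in standard and meets $[h(q), p]$ (since $h(x) \le p$), giving a filled $[h(q), p^{\mathrm{s}}(x)]$.

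Let $K_q$ be the set of keys $y \in S$ with $p^{\mathrm{s}}(y) \ge h(q)$ and $p^{\mathrm{s}}(y)$ in the standard run containing $h(q)$; this set depends only on $h$, and $|K_q|$ is exactly the number of filled slots from $h(q)$ to the end of that run. Since $s$ is independent of $h$ and universal, and every $y \in K_q$ satisfies $y \ne q$ (as $q \notin S$), we have $\Pr[s(y) = s(q) \mid h] \le 1/2^b$. A union bound over $K_q$ followed by expectation over $h$ gives
\[ \Pr[\text{false positive}] \;\le\; \E\!\left[ |K_q|/2^b \right] \;=\; \E[|K_q|]/2^b \;=\; O(1/2^b), \]
using that $\E[|K_q|] = O(1)$ is precisely the expected run length controlled by Theorem \ref{thm:linpb-5}. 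The main obstacle is the monotonicity coupling: the filter's probe is itself a function of $s$, and replacing it by the $h$-only standard run is what lets universality of $s$ be applied cleanly without having to grapple with the joint distribution of $s$ and the filter's geometry.
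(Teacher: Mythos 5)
Your proof is correct and takes essentially the same route as the paper: fix $h$, couple the filter with exact linear probing to replace the $s$-dependent probe region by the $h$-only set of keys encountered from $h(q)$ to the first empty slot (your $K_q$ equals the paper's $X(q)$), then apply universality of $s$ with a union bound and invoke Theorem~\ref{thm:linpb-5} for $\E[|X(q)|]=O(1)$. The only difference is that you spell out the monotonicity coupling in terms of filled-slot sets, whereas the paper defers exactly this step to Problem~\ref{pr:monotone-skip} as a reader exercise.
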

\begin{proof} The keys from $S$ have been inserted in some given order.
Let us assume that $h$ is fixed. Suppose we inserted the keys
exactly, that is, not just their signatures, and let $X(q)$ be the set
of keys encountered when searching for $q$, that is, $X(q)$ is the set
of keys from $h(q)$ and till the first empty location. Note that $X(q)$
depends only on $h$, not on $s$.

In Problem \ref{pr:monotone-skip} you will argue that if $q$ is a
false positive, then $s(q)=s(x)$ for some $x\in X(q)$. 

For every key $x\in [u]\setminus\{q\}$, by universality of $s$, we have
$\Pr[s(x)=s(q)]\leq 1/2^b$. Since $q\not\in S\supseteq X(q)$, by union, 
$\Pr[\exists x\in X(q): s(x)=s(q)]\leq |X(q)|/2^b$. It follows that the probability that $q$ is a false
positive is bounded by 
\[\sum_{Y\subseteq S}\Pr[X(q)=Y]\cdot |Y|/2^b=\E[|X(q)|]/2^b.\]
By Theorem \ref{thm:linpb-5}, $\E[|X(q)|]=O(1)$ when $h$ is 5-independent.
\end{proof}

\begin{problem}\label{pr:monotone-skip} To complete the proof
of Theorem \ref{thm:filter}, consider a sequence $x_1,\ldots,x_n$ of
distinct keys inserted in an exact linear probing table
(as defined in Section \ref{sec:linpb}). Also, let
$x_{i_1},...,x_{i_m}$ be a subequence of these keys, that is, $1\leq
i_1<i_2<\cdots < i_m\leq n$. The task is to prove any fixed $h:[u]\fct
[t]$ and any fixed $j\in[t]$, that when only the subsequence is
inserted, then the sequence of keys encountered from location $j$ and
till the first empty location is a subsequence of those encountered
when the full sequence is inserted.

{\em Hint.} Using induction on $n$, show that
the above statement is preserved when a new key $x_{n+1}$ is
added. Here $x_{n+1}$ may or may not be part of the subsequence.

The relation to the proof of Theorem \ref{thm:filter} is that when we
insert keys in a filter, we skip keys whose signatures are found as false
postives. This means that only a subsequence of the keys have their 
signatures inserted. When searching for a key $q$ starting
from from location $j=h(q)$, we have thus proved that
we only consider (signatures of) a subset of the set $X(q)$ of keys 
that we would have considered if all keys where inserted. In particular,
this means that if we from $j$ encounter a key $x$ with $s(x)=s(q)$, then
$x\in X(q)$ as required for the proof of Theorem \ref{thm:filter}.
\end{problem}

\begin{problem}\label{pr:filter-no-deletion} Discuss why we canot support
deletions. 
\end{problem}

\begin{problem}\label{pr:hash-signature} What would happen if
we instead used $h(s(x))$ as the hash function to place or find $x$?
What would be the probability of a false positive?
\end{problem}

Sometimes it is faster to generate the hash values and signatures
together so that the pairs $(h(x),s(x))$ are 5-independent while
the hash values and signatures are not necessarily independent of each other.
An example is if we generate a larger hash value, using high-order bits for 
$h(x)$ 
and low-order bits for $s(x)$.
In this situation we get a somewhat weaker bound than that in Theorem \ref{thm:filter}.
\begin{theorem}\label{thm:filter0} Assuming 
that $x\mapsto (h(x),s(x))$ is 5-independent, the probability of a false positive on a given key
$q\not\in S$ is $O(1/2^{2b/3})$.
\end{theorem}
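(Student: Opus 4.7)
The plan is to follow the proof of Theorem~\ref{thm:filter} but, since we no longer have independence between $h$ and $s$, to split on a length threshold $K$ for the scan $L=|X(q)|$:
$$\Pr[\text{FP}]\le \Pr[L\ge K]+\Pr[\text{FP}\wedge L<K].$$
The first term comes for free: the marginal of a 5-independent $(h,s)$ is 5-independent in $h$ alone, so the tail bound $\Pr[L\ge 2^{\ell+2}]=O(2^{-2\ell})$ that underlies Theorem~\ref{thm:linpb-5} gives $\Pr[L\ge K]=O(1/K^2)$.

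For the short-run term I would repeat the level decomposition of Lemma~\ref{lem:query-runs}. For each $\ell$ with $2^{\ell+3}\le K$, if $L\in [2^{\ell+2},2^{\ell+3})$ and $q$ is a false positive then Lemma~\ref{lem:run} forces one of twelve nearby $\ell$-intervals $I$ to be near-full, i.e.\ $N_I:=|\{x\in S\setminus\{q\}:h(x)\in I\}|\ge\tfrac34 2^\ell$, while the colliding witness key must have $h(x)$ in a window $W$ of length $O(2^\ell)$ around $h(q)$, because any key in the run has its hash value inside the run. Writing $M_W=|\{x\in S\setminus\{q\}:h(x)\in W,\ s(x)=s(q)\}|$ and union-bounding over the twelve choices of $I$, the key estimate I would establish is
$$\Pr[N_I\ge \tfrac34 2^\ell\wedge M_W\ge 1]\le \E\!\left[M_W\bigl(N_I/(\tfrac34 2^\ell)\bigr)^{3}\right]=O(2^{-3\ell})\,\E[M_W N_I^3]=O(2^\ell/2^b).$$
The crucial step is expanding $\E[M_WN_I^3]$ as a sum over $(x_0,x_1,x_2,x_3)\in(S\setminus\{q\})^4$ and invoking 5-independence of the pairs on the five distinct keys $q,x_0,x_1,x_2,x_3$ in the dominant case: the summand factors as $(|W|/t)(1/2^b)(2^\ell/t)^3$, giving a main contribution $O(2^{4\ell}/2^b)$, while each coincidence pattern on $(x_0,x_1,x_2,x_3)$ trades a factor of $n/t=\Theta(1)$ for either a shrunk intersection window or a lost signature factor, so it stays strictly below $O(2^{4\ell}/2^b)$. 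Summing over $\ell$, and absorbing the trivial $L<4$ case via a direct 2-independent pair estimate, yields $\Pr[\text{FP}\wedge L<K]=O(K/2^b)$.

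Combining, $\Pr[\text{FP}]=O(1/K^2+K/2^b)$, and choosing $K=2^{b/3}$ balances the two pieces to $O(2^{-2b/3})$ as claimed. The main obstacle will be the moment bookkeeping for $\E[M_WN_I^3]$: I must enumerate the coincidence patterns among $(x_0,x_1,x_2,x_3)$ (and with $q$) carefully, because this is the one place where the 5-independence hypothesis is actually used — on the five distinct keys $q,x_0,x_1,x_2,x_3$ that appear in the leading all-distinct tuples. The $2^b\mapsto 2^{2b/3}$ loss relative to Theorem~\ref{thm:filter} then comes entirely from the optimization over $K$ and not from any slack in the moment bound.
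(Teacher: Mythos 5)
Your decomposition on the run-length threshold $K=2^{b/3}$, and the tail bound $\Pr[L\ge K]=O(1/K^2)$ using only the marginal 5-independence of $h$, match the paper's ``case (a)'' exactly. But for ``case (b)'' you are doing far more work than necessary, and in a way that you haven't actually closed. The paper's observation is that once the run around $h(q)$ has length less than $K$, \emph{every} key in $X(q)$ has $h(x)\in(h(q)-K,h(q)+K]$, a window of length $2K$; so a false positive in case (b) forces the existence of some $x\in S$ with $h(x)$ in that window \emph{and} $s(x)=s(q)$. This is a 2-event for each single key $x$, and 2-independence of the pair $(h,s)$ already gives $\Pr[h(x)\in W\wedge s(x)=s(q)]=|W|/(t\,2^b)$; a union bound over the $n$ keys yields $O(K/2^b)$, no near-fullness and no moments required. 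You spot exactly this simple argument when you handle the ``trivial $L<4$ case via a direct 2-independent pair estimate'' --- but the same estimate applies verbatim for all $L<K$, which is the whole of case (b).

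By contrast, you drag the near-fullness condition $N_I\ge\tfrac34 2^\ell$ into the short-run case and bound the joint event by $\E[M_W N_I^3]$. That route can be made to work (your dominant all-distinct term does give $O(2^{4\ell}/2^b)$, and the coincidence patterns appear to be subdominant), but it requires the full 5-independence in a place the paper only needs 2-independence, and it requires a careful enumeration of coincidence patterns that you have only sketched and explicitly flag as ``the main obstacle.'' As written the proposal therefore has a real gap --- the moment bookkeeping is not carried out --- and even if carried out it is strictly more machinery than the statement needs. The 5-independence is used in this theorem only to get the 4th-moment tail bound $P_\ell=O(1/2^{2\ell})$ for case (a); case (b) is pure pairwise independence.
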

\begin{proof}
Considering the exact insertion of all keys, we consider two
cases. Either (a) there is a run of length at least $2^{b/3}$ around
$h(q)$, or (b) there is no such run.  

For case (a), we use Lemma \ref{lem:query-runs} together with the bound
$P_\ell=O(1/2^{2\ell})$ from the proof of Theorem~\ref{thm:linpb-anal}. We get that the probability of getting a run of
length at least $2^{b/3}$ is bounded by
\[\sum_{\ell=b/3-2}^\infty 12 P_\ell=O(1/2^{2b/3}).\]
We now consider case (b). 
By the statement proved in Problem \ref{pr:monotone-skip}, we
know that any signature $s(x)$ considered is from a key
$x$ from the set $X(q)$ of keys that we would have considered from $j=h(q)$
if all keys were inserted exactly. With
no run of length at least $2^{b/3}$, all
keys in $X(q)$ must hash to $(h(q)-2^{b/3},h(q)+2^{b/3}]$. Thus, if we get a false positive in case (b), it is because there is a key
$x\in S$ with $s(x)=s(q)$ and $h(x)\in
(h(q)-2^{b/3},h(q)+2^{b/3}]$. Since $(h(x),s(x))$ and $(h(q),s(q))$
are independent, the probability that this happens for $x$
is bounded by
$2^{1+b/3}/(t2^b)= O(1/(n2^{2b/3}))$, yeilding $O(1/2^{2b/3})$
when we sum over all $n$ keys in $S$. By union, the probability of a
false positive in case (a) or (b) is bounded by $O(1/2^{2b/3})$, as
desired.
\end{proof}
We note that with the simple tabulation mentioned in Section \ref{sec:charhash}, we can put hash-signature pairs
as concatenated bit strings in the character tables $T_1,\ldots,T_c$.
Then $(h(x),s(x))=T_1[x_1]\oplus\cdots\oplus T_c[x_c]$. The nice
thing here is that with simple tabulation hashing, the output bits
are all completely independent, which means that Theorem \ref{thm:filter}
applies even though we generate the hash-signature pairs together.


\end{document}